\let\NAT@parse\undefined
\newtheorem{theorem}{Theorem}
\newtheorem{lemma}[theorem]{Lemma}
\begin{document}

\title{\huge{The Second Order Scattering Fading Model\\ with Fluctuating Line-of-Sight}}
\author{Jes\'us L\'opez-Fern\'andez, Gonzalo J. Anaya-L\'opez, and F.~Javier L\'opez-Mart\'inez

\thanks{Manuscript received MONTH xx, YEAR; revised XXX. The review of this paper was coordinated by XXXX. This work was funded in part by Junta de Andaluc\'ia, the European Union and the European Fund for Regional Development FEDER through grants P18-RT-3175 and EMERGIA20-00297, in part by MCIN/AEI/10.13039/501100011033 through grant PID2020-118139RB-I00, and in part by Universidad de M\'alaga (UMA20-FEDERJA-002). (\textit{Corresp: Jes\'us L\'opez-Fern\'andez}). This work has been submitted to the IEEE for possible publication. Copyright may be transferred without notice, after which this version may no longer be accessible.}
\thanks{The authors are with the Communications and Signal Processing Lab, Telecommunication Research Institute (TELMA), Universidad de M\'alaga, M\'alaga, 29010, (Spain). F.~J.~L\'opez-Mart\'inez is also with the Dept. Signal Theory, Networking and Communications, University of Granada, 18071, Granada (Spain). (E-mails: $\rm jlf@ic.uma.es, gjal@ic.uma.es, fjlm@ugr.es$)}



}
\maketitle

\begin{abstract}
We present a generalization of the notoriously unwieldy second-order scattering fading model, which is helpful to alleviate its mathematical complexity while providing an additional degree of freedom. This is accomplished by allowing its dominant specular component associated to line-of-sight propagation to randomly fluctuate. The statistical characterization of the newly proposed model is carried out, providing {closed-form} expressions for its probability and cumulative distribution functions, as well as for its generalized Laplace-domain statistics and raw moments. We exemplify how performance analysis can be done in this scenario, and discuss the role of the fading model parameters on system performance.
\end{abstract}

\begin{IEEEkeywords}
Channel modeling, fading models, Rice, Rician shadowed, second order scattering.
\end{IEEEkeywords}

\section{Introduction}
\IEEEPARstart{D}{ouble-scattering} fading conditions often appear in radio propagation environments like vehicular communications \cite{Bithas2016,Ai2018,Bithas2018}, unmanned aerial vehicle-enabled communications \cite{Bithas2020}, backscattering systems \cite{Devineni2019,Unai2020}, indoor mobile \cite{Vinogradov2015}, or land mobile satellite channels \cite{Nikolaidis2018}. A family of multiple-order scattering fading channels was originally defined by Andersen \cite{Andersen2002} and formalized by Salo \cite{Salo2006}, so that a finite number of increasing-order scattering terms is considered. Even for its simplest formulation that consists of a \ac{LoS} component plus Rayleigh \textit{and} \ac{dR} diffuse components, referred to as \ac{SOSF} model, its mathematical complexity and potential numerical instability have limited the applicabilty of this otherwise useful fading model. Only recently, an alternative formulation for the \ac{SOSF} model proposed in \cite{Lopez2018} provided a reasonably simpler approach that fully avoided the original numerical issues suffered by the model, and the \ac{MGF} of the \ac{SOSF} model was derived for the first time.

Several state-of-the-art fading models incorporate the ability to model random amplitude fluctuations of dominant specular waves associated to \ac{LoS} propagation. Relevant examples include popular fading models like Rician shadowed \cite{Abdi2003} and its generalizations \cite{Paris2014,Romero2022}. Recently, a \ac{fdRLoS} fading model was formulated as a combination of a randomly fluctuating \ac{LoS} component plus a \ac{dR} diffuse one, {although the first-order Rayleigh-like component also present in the original \ac{SOSF} model is neglected}. In this work, we define a natural generalization of the \ac{SOSF} model to incorporate random fluctuations on its \ac{LoS} component, for which the moniker \ac{fSOSF} is proposed. The newly proposed model is able to capture the same propagation conditions as the baseline \ac{SOSF} model, and allows to tune the amount of fluctuation suffered by the dominant component through one additional parameter. Interestingly, the addition of a new parameter for this model does not penalize its mathematical tractability, and the resulting expressions for its chief statistics have the same functional form (even simpler in some cases) {as those of the} original \ac{SOSF} model. The applicability of the \ac{fSOSF} model for performance analysis purposes is also exemplified through several illustrative examples.

\textit{Notation}: $\mathbb{E}\{X\}$ and $|X|$ denote the statistical average and the modulus of the complex \ac{RV} $X$ respectively. The \ac{RV} $X$ conditioned to $Y$ will be denoted as $X|Y$. The symbol $\sim$ reads as \emph{statistically distributed as}. The symbol $\stackrel{d}{=}$ reads as \emph{equal in distribution}. A circularly symmetric normal \ac{RV} $X$ with mean $\mu$ and variance $\Omega$ is denoted as $X\sim \mathcal{N}_c(\mu,\Omega)$.
%

\section{Physical model}
\label{Sec:The system model}
Based on the original formulation of the \ac{SOSF} model introduced by Andersen \cite{Andersen2002} and Salo \cite{Salo2006}, let us consider the following definition for the received signal $S$ as
\begin {equation}
S=\omega_0 \sqrt{\xi} e^{j\phi}+\omega_1 G_1+\omega_2 G_2 G_3,
\label{Eq:Modelo_fSOSF}
\end{equation}
where $\omega_0 e^{j\phi}$ is the dominant specular component classically associated to \ac{LoS} propagation, with $\omega_0$ being a constant value and $\phi$ a \ac{RV} uniformly distributed in $[0,2\pi)$. The \acp{RV} $G_1$, $G_2$ and $G_3$ are distributed as independent zero-mean, unit-variance complex normal variables, i.e., $G_i\sim\mathcal{N}_c(0,1)$ for $i=1,2,3$. {The constant parameters $\omega_0$, $\omega_1$ and $\omega_2$ act as scale weights for the \ac{LoS}, Rayleigh and \ac{dR} components, respectively}. Now, the key novelty of the model in \eqref{Eq:Modelo_fSOSF} lies on its ability to incorporate random fluctuations into the \ac{LoS} similarly to state-of-the-art fading models in the literature \cite{Abdi2003,Paris2014} through $\xi$, which is a Gamma distributed \ac{RV} with unit power and real positive shape parameter $m$, with \ac{PDF}:
\begin{equation}
f_{\xi}(u)=\frac{m^mu^{m-1}}{\Gamma(m)}e^{-m u},
\end{equation}
where $\Gamma(\cdot)$ is the gamma function. The severity of \ac{LoS} fluctuations is captured through the parameter $m$, being fading severity inversely proportional to this shape parameter. In the limit case of $m\rightarrow\infty$, $\xi$ degenerates to a deterministic unitary value and the \ac{LoS} fluctuation vanishes, thus collapsing into the original \ac{SOSF} distribution. 
 
Besides $m$, the \ac{fSOSF} model is completely defined by the constants $\omega_0$, $\omega_1$ and $\omega_2$. Typically, an alternative set of parameters is used in the literature for the baseline \ac{SOSF} model, i.e. ($\alpha, \beta$), defined as
\begin {equation}
\alpha=\frac{\omega_2^2}{\omega_0^2+\omega_1^2+\omega_2^2},\;\;\;\;\;\beta=\frac{\omega_0^2}{\omega_0^2+\omega_1^2+\omega_2^2}.
\label{Eq:alpha_beta}
\end{equation}
Assuming a normalized channel (i.e., $\mathbb{E}\{|S|^2\}=1$) so that $\omega_0^2+\omega_1^2+\omega_2^2=1$, the parameters $(\alpha,\beta)$ are constrained to the triangle $\alpha \geq 0$, $\beta \geq 0$ and $\alpha+\beta \leq 1$.

\section{Statistical Characterization}
\label{Sec:3}
Let us define the instantaneous \ac{SNR} $\gamma=\overline\gamma|S|^2$, where $\overline\gamma$ is the average \ac{SNR}. The model in \eqref{Eq:Modelo_fSOSF} reduces to the \ac{SOSF} one \cite{Salo2006} when conditioning to $\xi$. However, it is possible to find an alternative pathway to connect this model with a different underlying model in the literature, so that its mathematical formulation is simplified. 

According to \cite{Lopez2018}, the \ac{SOSF} model can be seen as a Rician one when conditioning to $x=|G_3|^2$. Hence, this observation can be leveraged to formulate the \ac{fSOSF} model in terms of an underlying Rician shadowed one, as shown in the sequel. For the \ac{RV} $\gamma$ we can express:
\begin {equation}
\gamma=\overline\gamma|\omega_0 {\color{black}\sqrt{\xi}}e^{j\phi}+\omega_1 G_1+\omega_2 G_2 G_3|^2.
\label{Eq:gamma}
\end{equation}
Since $G_3$ is a complex Gaussian RV, we reformulate $G_3=|G_3|e^{j\Psi}$, where $\Psi$ is uniformly distributed in $[0,2\pi)$. Because $G_2$ is a circularly-symmetric \ac{RV}, $G_2$ and $G_2e^{j\Psi}$ are equivalent in distribution, so that the following equivalence holds for $\gamma$ 
\begin {equation}
\gamma\stackrel{d}{=}\overline\gamma|\omega_0 {\color{black}\sqrt{\xi}}e^{j\phi}+\omega_1 G_1+\omega_2 G_2| G_3||^2.
\label{Eq:gamma2}
\end{equation}
Conditioning on $x=|G_3|^2$, define the conditioned \ac{RV} $\gamma_x$ as
\begin {equation}
\gamma_x\triangleq\overline\gamma|\omega_0 {\color{black}\sqrt{\xi}}e^{j\phi}+\omega_1 G_1+\omega_2 \sqrt{x} G_2|^2.
\label{Eq:gamma3}
\end{equation}
where the two last terms correspond to the sum of two RVs distributed as $\mathcal{N}_c(0;\omega_1^2)$ and $\mathcal{N}_c(0;\omega_2^2x)$, respectively. This is equivalent to one single RV distributed as $\mathcal{N}_c(0;\omega_1^2+\omega_2^2x)$. With all these considerations,$\gamma_x$ is distributed according to a squared Rician shadowed \ac{RV} \cite{Abdi2003} with parameters $m$ and
\begin{align}
\overline{\gamma}_x&=\frac{\omega_0^2}{\omega_1^2+x \omega_2^2}=\overline{\gamma}(1-\alpha(1-x)), \label{Eq_gamma_x}\\
K_x&=\omega_0^2+\omega_1^2+x \omega_2^2=\frac{\beta}{1-\beta-\alpha(1-x)} \label{Eq_K_x}.
\end{align}
We note that these parameter definitions include as special case the model in \cite{Lopez2022}, when $\omega_1^2=0$. In the following set of Lemmas, the main statistics of the \ac{fSOSF} distribution are introduced for the first time in the literature; these include the \ac{PDF}, \ac{CDF}, \ac{GMGF} and the moments.

\begin{lemma}\label{lemma1}
Let $\gamma$ be an \ac{fSOSF}-distributed \ac{RV} with shape parameters $\{\alpha,\beta,m\}$, i.e., $\gamma\sim\mathcal{F}_{\rm SOSF}\left(\alpha,\beta,m;\overline\gamma\right)$. Then, the \ac{PDF} of $\gamma$ is given by
\begin{align}\label{eqpdf1}
f_\gamma(\gamma)=\int_{0}^{\infty}&\tfrac{m^m(1+K_x)}{(m+K_x)^m\overline\gamma_x}e^{-\tfrac{1+K_x}{\overline\gamma_x}\gamma-x}\times\nonumber\\&{}_1F_{1}\left(m;1;\tfrac{K_x(1+K_x)}{K_x+m}\tfrac{\gamma}{\overline\gamma_x}\right)dx,
 \end{align}
\begin{align} \nonumber
f_{\gamma}(\gamma)=&\sum_{j=0}^{m-1} \tbinom{m-1}{j} \frac{\gamma^{m-j-1} e^{\tfrac{m(1-\alpha-\beta)+\beta)}{m\alpha}}\left( \tfrac{\beta}{m}\right)^{m-j-1} }{(\overline{\gamma})^{m-j}(m-j-1)! \alpha^{2m-j-1}}\times \\ 
&\sum_{r=0}^{j} \binom{j}{r} \left( \tfrac{-\beta}{m}\right)^{j-r} \alpha^{r} \times \nonumber \\
&\Gamma\left(r-2m+j+2, \tfrac{m(1-\alpha-\beta)+\beta)}{m\alpha}, \tfrac{\gamma}{\alpha\overline{\gamma}} \right),
\label{Eq_PDF}
\end{align}
for $m\in\mathbb{R}^+$ and $m\in\mathbb{Z}^+$, respectively, and where $_1F_{1}\left(\cdot;\cdot;\cdot\right)$ and $\Gamma(a,z,b)=\int_{z}^{\infty}t^{a-1} e^{-t}e^{\tfrac{-b}{t}}dt$ are Kummer's hypergeometric funcion, and a generalization of the incomplete gamma function defined in \cite{CHAUDHRY199499}, respectively.
\end{lemma}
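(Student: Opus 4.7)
The plan is to apply the law of total probability to the conditioning $x = |G_3|^2$ already set up in the excerpt. Since $G_3 \sim \mathcal{N}_c(0,1)$, the squared magnitude $x$ is unit-mean exponential with density $f_x(x) = e^{-x}$ on $[0,\infty)$. Conditioned on $x$, the RV $\gamma$ is squared Rician shadowed with shape parameter $m$ and effective parameters $(K_x, \overline{\gamma}_x)$ given in \eqref{Eq_gamma_x}--\eqref{Eq_K_x}, whose PDF is known from \cite{Abdi2003} to coincide with the integrand of \eqref{eqpdf1} minus the $e^{-x}$ factor. Composing with $f_x(x)=e^{-x}$ and integrating over $x\in[0,\infty)$ immediately yields the first representation.

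For the second expression, valid when $m \in \mathbb{Z}^+$, I would exploit the fact that the Kummer function with integer first parameter admits the finite expansion ${}_1F_1(m;1;z) = e^{z}\sum_{j=0}^{m-1}\binom{m-1}{j}(-z)^{j}/j!$, equivalent to the Laguerre-polynomial identity after Kummer's transformation. Inserting this into \eqref{eqpdf1}, the inner $e^{z}$ combines with the leading exponential to leave a simpler linear-in-$\gamma$ argument in the surviving exponential. Using \eqref{Eq_gamma_x}--\eqref{Eq_K_x} to write $K_x$ and $\overline{\gamma}_x$ as explicit affine functions of $x$, the full exponent in the integrand becomes a rational function of $x$ of the form $-c_{1}x - c_{2}/(c_{3}+c_{4}x)$ plus constants, and the prefactor is a rational function of $x$. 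The key step is then the affine change of variable $t = c_{0}+c_{1}x$ which linearizes the denominator and turns the remaining integral into exactly the kernel $\int_{z}^{\infty}t^{a-1}e^{-t-b/t}dt$ defining $\Gamma(a,z,b)$ in \cite{CHAUDHRY199499}, with the lower limit $z = (m(1-\alpha-\beta)+\beta)/(m\alpha)$ coming from $x=0$. A final binomial expansion of the polynomial-in-$t$ prefactor then splits the single integral into the double sum over $(j,r)$ displayed in \eqref{Eq_PDF}.

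The main obstacle will be the algebraic bookkeeping in this last step: tracking how the various powers of $\alpha$, $\beta/m$, $\overline{\gamma}$ and $\gamma$ appearing in the conditional Rician shadowed PDF reshuffle under the finite ${}_1F_1$ expansion and the affine substitution, so as to recover precisely the exponents $r-2m+j+2$, $m-j-1$ and $2m-j-1$ in the claimed form, and to identify the constant factor $e^{(m(1-\alpha-\beta)+\beta)/(m\alpha)}$ as the offset generated when the exponent is rewritten in the $-t - b/t$ form. No genuinely new analytic ingredient beyond the generalized incomplete gamma identity of \cite{CHAUDHRY199499}, the integer-$m$ reduction of Kummer's function, and the conditioning argument of \cite{Lopez2018} is required; the novelty lies in carrying out this reduction for the fluctuating-LoS variant introduced here.
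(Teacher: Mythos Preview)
Your proposal is correct and follows essentially the same route as the paper: condition on $x=|G_3|^2\sim\mathrm{Exp}(1)$, average the squared Rician shadowed PDF, then for integer $m$ pass to a finite sum, perform the affine substitution in $x$, binomially expand the polynomial prefactor, and recognize the remaining integral as $\Gamma(a,z,b)$. The only procedural difference is that the paper quotes the finite-sum Rician shadowed PDF directly from \cite{Martinez2017} rather than obtaining it via the Kummer/Laguerre reduction of ${}_1F_1(m;1;z)$ (and note your expansion should read $z^{j}/j!$, not $(-z)^{j}/j!$); the subsequent change of variable $t=\tfrac{1}{\alpha}\bigl(1-\beta\tfrac{m-1}{m}-\alpha(1-x)\bigr)$ and binomial expansion of $(t-\tfrac{\beta}{\alpha m})^{j}$ are exactly as you outline.
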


\begin{proof}
See Appendix \ref{ap1}.
\end{proof}

\begin{lemma}\label{lemma2}
Let $\gamma\sim\mathcal{F}_{\rm SOSF}\left(\alpha,\beta,m;\overline\gamma\right)$. Then, the \ac{CDF} of $\gamma$ is given by
\begin{align} \nonumber
\label{CDF_SOSFShadowed}
F_{\gamma}(\gamma)=&1-e^{\tfrac{m(1-\alpha-\beta)+\beta)}{m\alpha}} \sum_{j=0}^{m-1}\sum_{r=0}^{m-j-1}\sum_{q=0}^{j}\tbinom{m-1}{j}\tbinom{j}{q}  \times \\
&\tfrac{(-1)^{j-q} \alpha^{q-r-m+1}}{r!} \left(\tfrac{\gamma}{\overline \gamma}\right)^r \left( \tfrac{\beta}{m}\right)^{m-q-1}\times\\ \nonumber
&\Gamma\left(q-r-m+2,\tfrac{m(1-\alpha-\beta)+\beta)}{m\alpha},\tfrac{\gamma}{\alpha \overline \gamma}\right).
\end{align}
for $m\in\mathbb{Z}^+$.
\end{lemma}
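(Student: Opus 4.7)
The plan is to obtain the CDF as the unconditional expectation of the conditional Rician shadowed CDF already exposed in the derivation of Lemma \ref{lemma1}. Since $x=|G_3|^2$ is exponential with unit mean, we have
\begin{equation*}
F_\gamma(\gamma) = \int_{0}^{\infty} F_{\gamma|x}(\gamma)\,e^{-x}\,dx,
\end{equation*}
where $F_{\gamma|x}$ is the squared Rician shadowed CDF with parameters $(m,\overline\gamma_x,K_x)$ from (7)--(8). Reducing to this one-dimensional integral is the whole point of the Rician-shadowed viewpoint, and the restriction $m\in\mathbb{Z}^+$ is precisely what lets us avoid the extra hypergeometric layer that appears for general real $m$.

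For $m\in\mathbb{Z}^+$, Kummer's transformation ${}_1F_1(m;1;z)=e^{z}L_{m-1}(-z)$ turns the Rician shadowed density into a finite Laguerre sum. Integrating term-by-term in $\gamma$ gives a closed-form $F_{\gamma|x}$ of the shape
\begin{equation*}
F_{\gamma|x}(\gamma) = 1 - e^{-m(1+K_x)\gamma/[(m+K_x)\overline\gamma_x]}\,P_{m-1}\!\left(\gamma,K_x,\overline\gamma_x\right),
\end{equation*}
with $P_{m-1}$ a degree-$(m-1)$ polynomial in $\gamma$ whose coefficients are binomial combinations of $(m+K_x)^{-m}$-type factors. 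Using (7)--(8), the key identities simplify as $m(1+K_x)/[(m+K_x)\overline\gamma_x] = 1/[\alpha\overline\gamma\,(z+x)]$ and $(m+K_x)^{-1}\propto (z+x)^{-1}$, where $z=[m(1-\alpha-\beta)+\beta]/(m\alpha)$ is precisely the shift parameter that appears in the statement.

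The final step is the substitution $t=z+x$, under which $e^{-x}=e^{z}e^{-t}$ and the $\gamma$-exponential becomes $e^{-b/t}$ with $b=\gamma/(\alpha\overline\gamma)$. Expanding the polynomial prefactors via the binomial theorem twice (once to separate the $(z+x)^{-\text{power}}$ factors, once more to handle the $\gamma^r$ coefficients that depend on $x$), we are left with a triple sum over $(j,r,q)$ of integrals of the canonical form
\begin{equation*}
\int_{z}^{\infty} t^{a-1}\,e^{-t}\,e^{-b/t}\,dt = \Gamma(a,z,b),
\end{equation*}
which matches the generalization of the incomplete gamma function defined in \cite{CHAUDHRY199499}. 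Collecting constants yields the stated expression for $F_\gamma(\gamma)$.

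The main obstacle is the algebraic bookkeeping of the binomial expansions, in particular ensuring the first argument $q-r-m+2$ and the outer factor $\alpha^{q-r-m+1}$ emerge with the correct indices after the cascade. Useful consistency checks are $F_\gamma(0)=0$ (since $\Gamma(a,z,0)$ reduces to an ordinary upper incomplete gamma whose finite sum reconstructs the $e^{-z}$ prefactor), $F_\gamma(\infty)=1$ (every $\Gamma(a,z,b)$ vanishes as $b\to\infty$), and the $m\to\infty$ limit recovering the baseline SOSF CDF, all of which also serve to pin down the algebra unambiguously.
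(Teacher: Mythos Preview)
Your proposal is correct and follows essentially the same route as the paper: average the integer-$m$ Rician shadowed CDF over the unit-mean exponential law of $x$, apply the affine change of variables $t=z+x$ with $z=[m(1-\alpha-\beta)+\beta]/(m\alpha)$ (which is exactly the paper's substitution rewritten), expand binomially, and identify the resulting integrals as $\Gamma(a,z,b)$. The only cosmetic difference is that the paper quotes the conditional CDF directly from \cite{Martinez2017} rather than deriving it via the Kummer/Laguerre reduction you outline.
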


\begin{proof}
See Appendix \ref{ap2}.
\end{proof}

\begin{lemma}\label{lemma3}
Let $\gamma\sim\mathcal{F}_{\rm SOSF}\left(\alpha,\beta,m;\overline\gamma\right)$. Then, for $m\in\mathbb{Z}^+$ the \ac{GMGF} of $\gamma$ is given by
\begin{align}
\label{MGF SOSFS_1}
\mathcal{M}_{\gamma}^{(n)}(s)&=\sum_{q=0}^{n}\tbinom{n}{q}\frac{(-1)^{q+1}(m-n-q)_{n-q}(m)_q}{s^{n+1}\bar\gamma \alpha} \nonumber \times \\
&\sum_{i=0}^{n-q}\sum_{j=0}^{q}\sum_{r=0}^{m-1-n+q}\tbinom{n-q}{i}\tbinom{q}{j}\tbinom{m-1-n+q}{r} \nonumber  \times \\
&c^{n-q-i}d^{q-j}a(s)^{m-1-n+q-r} \Gamma(1+r+i+j)\nonumber \times \\
&{\rm U}(m+q,m+q-r-i-j,b(s)).
\end{align}
\begin{figure*}[ht!]
\begin{align}
\label{MGF SOSFS_2}
\mathcal{M}_{\gamma}^{(n)}(s)=&\sum_{q=0}^{n-m}\tbinom{n}{q}\tfrac{(-1)^{q+1}(m-n-q)_{n-q}(m)_q}{s^{n+1}\bar\gamma \alpha}\sum_{i=0}^{n-q}\sum_{j=0}^{q}\tbinom{n-q}{i}\tbinom{q}{j} c^{n-q-i}d^{q-j}\times \nonumber\\
&\left[\sum_{k=1}^{n+1-m-q}A_k(s) {\rm U}(k,k,a(s))+ \sum_{k'=1}^{m+q}B_k(s) {\rm U}(k',k',b(s))\right] + \nonumber\\
\vspace{2mm}&\sum_{q=n+1-m}^{n}\tbinom{n}{q}\tfrac{(-1)^{q+1}(m-n-q)_{n-q}(m)_q}{s^{n+1}\bar\gamma \alpha}\sum_{i=0}^{n-q}\sum_{j=0}^{q}\tbinom{n-q}{i}\tbinom{q}{j}c^{n-q-i}d^{q-j} \times \nonumber\\
&\sum_{r=0}^{m-1-n+q}\tbinom{m-1-n+q}{r} a(s)^{m-1-n+q-r} \Gamma(1+r+i+j){\rm U}(m+q,m+q-r-i-j,b(s)).
\end{align}
\hrulefill
\end{figure*}
for $m \geq n+1$, and in \eqref{MGF SOSFS_2} at the top of next page for $m < n+1$, where $A_k(s)$ and $B_k(s)$ are the partial fraction expansion coefficients given by
\begin{align}
\label{A_k}
A_k(s)=&\sum_{l=0}^{\sigma_1-k} \tfrac{\tbinom{\sigma_1-k}{l}}{(\sigma_1-k)!}(i+j-\sigma_1+k+l+1)_{\sigma_1-k-l} \times \nonumber \\
&(\sigma_2)_l (-1)^l(-a)^{i+j-\sigma_1+k+l}(b-a)^{-\sigma_2-l}, \nonumber \\
B_k(s)=&\sum_{l=0}^{\sigma_2-k}\tfrac{\tbinom{\sigma_2-k}{l}}{(\sigma_2-k)!}(i+j-\sigma_2+k+l+1)_{\sigma_2-k-l}\times \nonumber \\
&(\sigma_1)_l (-1)^l(-b)^{i+j-\sigma_2+k+l}(a-b)^{-\sigma_1-l},
\end{align}
with $\sigma_1=n+1-m-q$ and $\sigma_2=m+q$, and where ${\rm U}\left(\cdot,\cdot,\cdot \right)$ is Tricomi's confluent hypergeometric function \cite[(13.1)]{NIST}.
\end{lemma}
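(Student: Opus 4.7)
The natural route is to exploit the conditional Rician shadowed structure from Section~\ref{Sec:3}: given $x=|G_3|^2$, which is unit-mean exponentially distributed, $\gamma_x$ is squared Rician shadowed with parameters $(m,K_x,\overline\gamma_x)$ in \eqref{Eq_gamma_x}--\eqref{Eq_K_x}. Exchanging differentiation and integration,
\begin{equation*}
\mathcal{M}_\gamma^{(n)}(s)=\int_0^\infty \frac{d^{n}M_{\gamma|x}(s)}{ds^{n}}\,e^{-x}\,dx,
\end{equation*}
so the calculation reduces to (i) differentiating the Rician shadowed MGF $n$ times in $s$ and (ii) integrating the result against $e^{-x}$.

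The first ingredient is the explicit rational form of the Rician shadowed MGF for integer $m$,
\begin{equation*}
M_{\gamma|x}(s)=\frac{m^{m}(1+K_x)\bigl[(1+K_x)-s\overline\gamma_x\bigr]^{m-1}}{\bigl[m(1+K_x)-(m+K_x)s\overline\gamma_x\bigr]^{m}}.
\end{equation*}
Substituting \eqref{Eq_gamma_x}--\eqref{Eq_K_x}, both bracketed factors become linear in $x$; I would absorb the $x$-independent parts into quantities $a(s),b(s)$ and write the two factors as $a(s)-cx$ and $b(s)-dx$, with $c,d$ the ($s$-linear) coefficients of $x$. Leibniz's product rule then yields a sum over $q=0,\ldots,n$ of products $\frac{d^{n-q}}{ds^{n-q}}(a(s)-cx)^{m-1}\cdot\frac{d^{q}}{ds^{q}}(b(s)-dx)^{-m}$, producing the Pochhammer prefactor $(m)_q$ and a descending-factorial on the power $m-1$. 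A subsequent binomial expansion on each of the two factors separates the $s$- and $x$-dependence, yielding the inner $\binom{n-q}{i}\binom{q}{j}$ sums with weights $c^{n-q-i}d^{q-j}$ that decorate both \eqref{MGF SOSFS_1} and \eqref{MGF SOSFS_2}.

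The last step is the $x$-integration. Each remaining integrand has the form $x^{\mu}(p+qx)^{-\nu}e^{-x}$, which via the integral representation of Tricomi's confluent hypergeometric function collapses into a single ${\rm U}$ evaluation whose argument is $p/q$ up to $s$-dependent scalings. Two regimes now appear naturally. When $m\geq n+1$ the Leibniz expansion only produces nonnegative powers of $(a(s)-cx)$, so for every $q$ the only pole in $x$ lies in $(b(s)-dx)^{-m-q}$; the Tricomi integral is applied once per term and assembles \eqref{MGF SOSFS_1}. When $m<n+1$, terms with small $q$ yield improper rational integrands in $x$ with two pole families of multiplicities $\sigma_1=n+1-m-q$ and $\sigma_2=m+q$. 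I would then perform a partial-fraction expansion in $x$; the residue formulas, obtained by applying Leibniz $\sigma_1-k$ or $\sigma_2-k'$ times to the smooth factors, yield precisely the coefficients $A_k(s),B_{k'}(s)$ of \eqref{A_k}, and each simple-pole piece integrates to a single ${\rm U}(k,k,a(s))$ or ${\rm U}(k',k',b(s))$ contribution. The remaining $q$-range is handled as in the first case, assembling \eqref{MGF SOSFS_2}.

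The main obstacle is the partial-fraction step for $m<n+1$: one has to identify the two pole families correctly, evaluate the Leibniz-type residues that produce $A_k$ and $B_k$, and verify that the accumulated coefficients assemble into the stated closed form. Everything else---the Leibniz/binomial combinatorics and the Tricomi ${\rm U}$ identification---is routine but bookkeeping-heavy.
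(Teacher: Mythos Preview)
Your plan is essentially the paper's own proof: condition on $x$, take the integer-$m$ Rician shadowed MGF, apply Leibniz in $s$, binomially expand, and reduce the $x$-integral to Tricomi's ${\rm U}$ via $\int_0^\infty x^i(x+p)^{-j}e^{-x}dx=\Gamma(i+1){\rm U}(j,j-i,p)$, splitting into the one-pole case $m\ge n+1$ and the two-pole partial-fraction case $m<n+1$.

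One small correction worth flagging: the quantities $c,d$ in the statement are \emph{not} ``$s$-linear coefficients of $x$''. In the paper's parametrization each bracketed factor is written as $s\cdot\bar\gamma\alpha(x+c)-1$ (resp.\ with $d$), i.e.\ linear in $s$ with an $x$-dependent slope. Differentiating $[sA-1]^{m-1}$ with respect to $s$ therefore pulls down factors of $A=\bar\gamma\alpha(x+c)$, so \emph{after} differentiation one has four distinct linear-in-$x$ pieces, $(x+c)^{n-q}(x+a(s))^{m-1-n+q}(x+d)^{q}(x+b(s))^{-m-q}$, with $c,d$ constant and $a(s),b(s)$ carrying all the $s$-dependence. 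This is why \eqref{MGF SOSFS_1} has a third binomial sum over $r$ (expanding $(x+a(s))^{m-1-n+q}$), not just the two you mention. With that adjustment your outline goes through exactly as in the paper.
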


\begin{proof}
See Appendix \ref{ap3}.
\end{proof}

\begin{lemma}\label{lemma4}
Let $\gamma\sim\mathcal{F}_{\rm SOSF}\left(\alpha,\beta,m;\overline\gamma\right)$. Then, for $m\in\mathbb{Z}^+$ the $n^{\rm th}$ moment of $\gamma$ is given by
\begin{align}
\label{Moments}
\mathbb{E}[\gamma^n]=&(\bar\gamma \alpha)^n \sum_{q=0}^{n}\tbinom{n}{q}(-1)^{q-n}(m-n+q)_{n-q}(m)_q \nonumber \times \\
&\sum_{i=0}^{n-q}\sum_{j=0}^{q}\tbinom{n-q}{i}\tbinom{q}{j}c^{n-q-i}d^{q-j} (i+j)!
\end{align}
\end{lemma}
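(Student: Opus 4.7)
The plan is to condition on $x=|G_3|^2\sim\mathrm{Exp}(1)$ (which, by the discussion leading to Lemma~\ref{lemma1}, makes $\gamma\mid x$ a squared Rician shadowed variable with parameters $(m,K_x,\overline{\gamma}_x)$ as in \eqref{Eq_gamma_x}--\eqref{Eq_K_x}) and then integrate the conditional moment against the $\mathrm{Exp}(1)$ density of $x$. The crucial algebraic observation is that inserting \eqref{Eq_gamma_x}--\eqref{Eq_K_x} into the MGF of the squared Rician shadowed triggers two simplifications: $\overline{\gamma}_x K_x/(1+K_x)$ reduces to the $x$-independent constant $\overline\gamma\beta$, and $\overline{\gamma}_x/(1+K_x)=\overline\gamma\alpha(c+x)$. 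Using the abbreviations $c=(1-\alpha-\beta)/\alpha$ and $d=[m(1-\alpha-\beta)+\beta]/(m\alpha)$ introduced in Appendix~\ref{ap3} (which satisfy $d-c=\beta/(m\alpha)$), the conditional MGF admits the compact factored form
\[
M_{\gamma\mid x}(s)=\frac{\bigl(1-s\overline\gamma\alpha(c+x)\bigr)^{m-1}}{\bigl(1-s\overline\gamma\alpha(d+x)\bigr)^{m}}.
\]

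Given this factorization, the conditional moment follows from a Taylor expansion in $s$: the numerator is a finite binomial sum (because $m\in\mathbb{Z}^+$, which is the reason the lemma is restricted to positive integer $m$), while the denominator is expanded via the generalized binomial series $\sum_{j}(m)_j B^j/j!$. Extracting the coefficient of $s^n$ and multiplying by $n!$ yields
\[
\mathbb{E}[\gamma^n\!\mid\!x]=n!(\overline\gamma\alpha)^n\!\sum_{q=0}^{n}\!\binom{m-1}{n-q}\tfrac{(m)_q(-1)^{n-q}}{q!}(c+x)^{n-q}(d+x)^{q},
\]
and the elementary identity $n!\binom{m-1}{n-q}/q!=\binom{n}{q}(m-n+q)_{n-q}$ recasts the prefactor into the Pochhammer-symbol form displayed in \eqref{Moments}.

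The last step is to take the expectation over $x$. Expanding both $(c+x)^{n-q}$ and $(d+x)^{q}$ by the binomial theorem and invoking $\int_0^\infty x^{i+j}e^{-x}\,dx=(i+j)!$ reproduces exactly the inner double sum of \eqref{Moments}, completing the derivation. The only nontrivial step is the algebraic rearrangement that puts the conditional MGF into the factored form $(1-A)^{m-1}/(1-B)^m$; once this is in hand, the proof reduces to power-series bookkeeping and a standard Gamma integral. An alternative route through $\lim_{s\to 0}\mathcal{M}_\gamma^{(n)}(s)$ in Lemma~\ref{lemma3} is also available, but it requires delicate cancellation of the $s^{-(n+1)}$ prefactor against the $s\to 0$ asymptotics of the Tricomi $U$ function, so the conditioning approach outlined above is considerably more direct.
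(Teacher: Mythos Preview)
Your proposal is correct and follows essentially the same route as the paper: condition on $x$, extract the conditional $n$-th moment from the factored conditional MGF (which is precisely the product $F_1\cdot F_2$ of Appendix~\ref{ap3}), and then integrate against the exponential law of $x$ via $\int_0^\infty x^{p}e^{-x}\,dx=p!$. The only cosmetic difference is that the paper obtains $\mathbb{E}[\gamma_x^n]$ as $\lim_{s\to 0^-}\mathcal{M}_{\gamma_x}^{(n)}(s;x)$ using the Leibniz-rule derivative already recorded in \eqref{MGF_Rician Shadowed_derivative}--\eqref{F2_derivative}, whereas you read off the same quantity by Taylor expansion---the two are equivalent ways of picking out the $s^n$ coefficient.
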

\begin{proof}
See Appendix \ref{ap4}
\end{proof}

\newcommand\figureSize{1}
\section{Numerical results}
\label{Sec:4}

\begin{figure}[t]
    \centering
    \includegraphics[width=\figureSize \columnwidth]{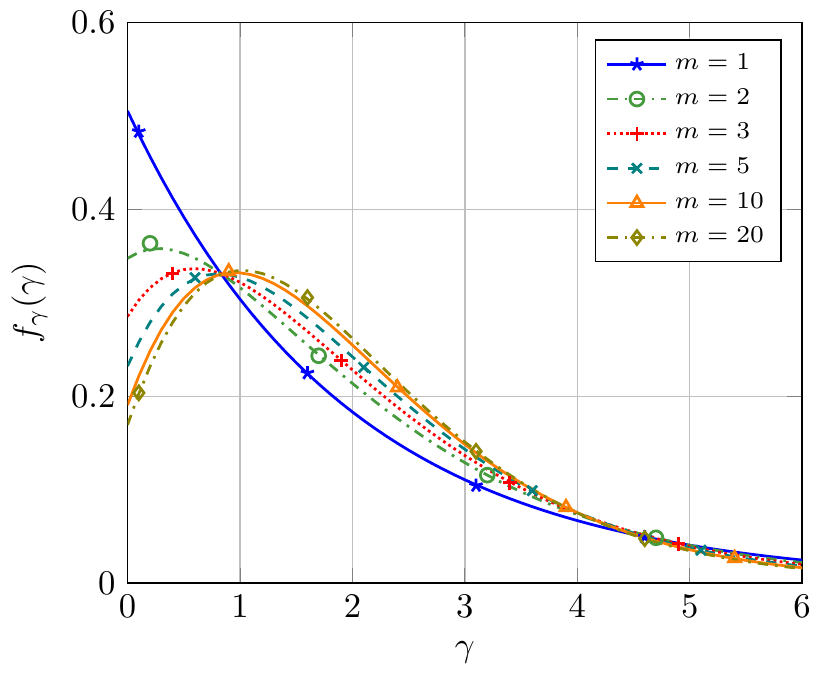}
    \caption{\ac{PDF} of \ac{fSOSF} model for different values of $m$. Parameter values are $\alpha=0.1$, $\beta=0.7$ and $\overline{\gamma}_{\rm dB}=3$dB. Theoretical values (\ref{Eq_PDF}) are represented with lines. Markers correspond to \ac{MC} simulations.}
    \label{fig:pdf}
\end{figure}

\begin{figure}[t]
    \centering
    \includegraphics[width=\figureSize \columnwidth]{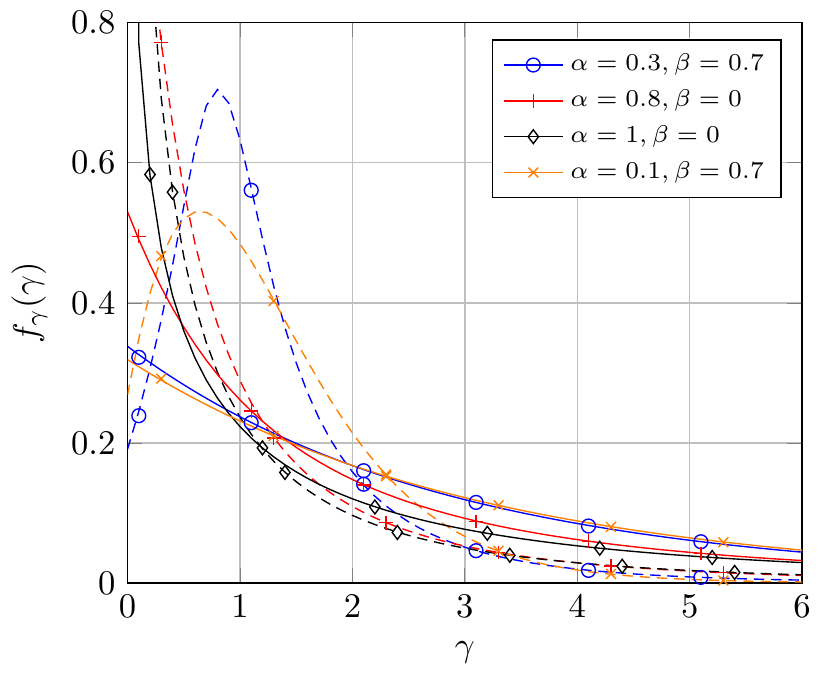}
    \caption{\ac{PDF} comparison for different values of $\alpha$ and $\beta$. Solid/dashed lines obtained with (\ref{Eq_PDF}) correspond to ($m=1$, $\overline{\gamma}_{\rm dB}=5$dB) and ($m=20$, $\overline{\gamma}_{\rm dB}=1$dB), respectively. Markers correspond to \ac{MC} simulations.}
    \label{fig:pdf2}
\end{figure}

{In Fig. \ref{fig:pdf}, we represent the \ac{PDF} of the \ac{fSOSF} fading model in Lemma \ref{lemma1}, for different values of the \ac{LoS} fluctuation severity parameter $m$. Parameter values are $\alpha=0.1$, $\beta=0.7$ and $\overline{\gamma}_{\rm dB}=3$dB. \ac{MC} simulations are also included as a sanity check. See that, as we increase the fading severity of the \ac{LoS} component (i.e., $\downarrow m$), the probability of occurrence of low SNR values increases, as well as the variance of the distribution.}

{In Fig. \ref{fig:pdf2} we analyze the impact of $\alpha$ and $\beta$ on the \ac{PDF} of the fSOSF model. Two scenarios have been considered: one with low fading severity and low average \ac{SNR} ($m=20$, $\overline{\gamma}_{\rm dB}=1 dB$), and another with higher fading severity and higher average \ac{SNR} ($m=1$, $\overline{\gamma}_{\rm dB}=5 dB$).} { In the case of mild fluctuations of the \ac{LoS} component, the effect of $\beta$ dominates to determine the shape of the distribution, observing a bell-shaped \ac{PDF}s with higher $\beta$ values.  Conversely, the value of $\alpha$ becomes more influential for the left tail of the distribution. We see that lower values of alpha make lower SNR values more likely, which implies an overall larger fading severity.}

{Finally, we analyze the \ac{OP} under \ac{fSOSF} which is defined as the probability that the instantaneous SNR takes a value below a given threshold, $\gamma_{\rm th}$. It can be obtained from the \ac{CDF} (\ref{CDF_promediado}) as }
\begin{equation}
    \label{Eq_OP}
    {\rm OP}=F_\gamma(\gamma_{\rm th}).
\end{equation}
{Fig. \ref{fig:op} shows the \ac{OP} under \ac{fSOSF} model, for different values of the parameter $m$. Additional parameters are set to $\alpha=0.1$ and $\beta=0.7$, and two threshold values are considered: $\gamma_{\rm th}=3$dB and $\gamma_{\rm th}=1$dB. We see that a $2$dB change in the thresold SNR is translated into a $\sim5$dB power offset in terms of \ac{OP} performance. We observe that as the severity of fading increases (i.e., $\downarrow m$), the less likely it is to exceed the threshold value $\gamma_{\rm th}$, i.e., the higher the \ac{OP}. In all instances, the diversity order (i.e., the down-slope decay of the OP) is one, and the asymptotic OP in \eqref{Eq_OP_asymp} tightly approximates the exact OP, which is given by} 
\begin{align}
    \label{Eq_OP_asymp}
		{\rm OP} & \left(\alpha,\beta,m;\overline\gamma,\gamma_{\rm th} \right)\approx\frac{\gamma_{\rm th}}{\alpha \overline\gamma } \sum_{j=0}^{m-1}  \tbinom {m-1}{j} \left(\tfrac{1-\beta-\alpha}{\alpha}\right)^{m-1-j} \times \nonumber\\
		&\Gamma(1+j) {\rm U}(m,m-j,\tfrac{1}{\alpha}-\tfrac{\beta}{\alpha}\left(\tfrac{m-1}{m} \right)-1).
\end{align}
Expression \eqref{Eq_OP_asymp} can be readily derived by integration over the asymptotic \ac{OP} of the underlying Rician shadowed model \cite[eq. (22)]{Lopez2022}.
\begin{figure}[t]
\centering
\includegraphics[width=\figureSize \columnwidth]{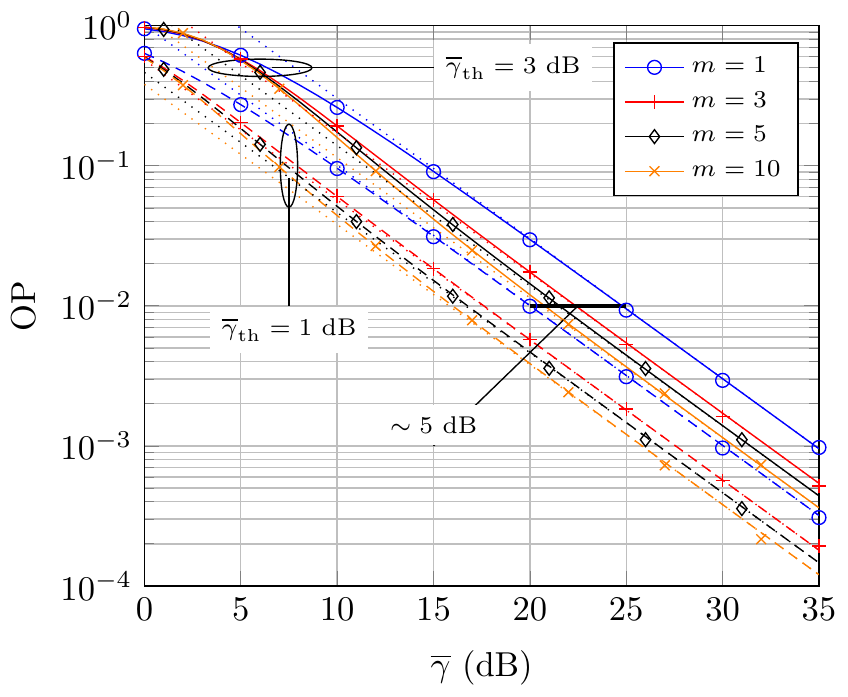}
\caption{\ac{OP} as a function of $\overline\gamma$, for different values of $m$. Parameter values are $\alpha=0.1$ and $\beta=0.7$. Solid/dashed lines correspond to $\gamma_{\rm th}=3$ and $\gamma_{\rm th}=1$ respectively. Theoretical values (\ref{Eq_OP}) are represented with lines. Markers correspond to \ac{MC} simulations.}
\label{fig:op}
\end{figure}

\section{Conclusions}
We presented a generalization of Andersen's \ac{SOSF} model by incorporating random fluctuations on its dominant specular component, yet without incurring in additional complexity. We provided closed-form expressions for its probability and cumulative distribution functions, as well as for its generalized Laplace-domain statistics and raw moments. Some insights have been provided on how the set of parameters ($\alpha$, $\beta$ and $m$) affect propagation, and its application to performance analysis has been exemplified through an outage probability analysis.
\appendices

\section{ Proof of Lemma~\ref{lemma1}}
\label{ap1}
Noting that $x=|G_3|^2$ is exponentially distributed with unitary mean, we can compute the distribution of $\gamma$ by averaging over all possible values of $x$ as:
\begin{equation}
f_{\gamma}(\gamma)=\int_0^\infty f_{\gamma_x}(\gamma;x) e^{-x}dx.
\label{Eq_int_PDF}
\end{equation}
The \ac{PDF} of $\gamma_x$ is that of a squared Rician shadowed \ac{RV}, which for integer $m$ is given by \cite[eq. (5)]{Martinez2017}
\begin{equation}
f_{\gamma_x}(\gamma;x)=\sum_{j=0}^{m-1}B_j \left( \tfrac{m-j}{\omega_B}\right)^{m-j} \tfrac{\gamma^{m-j-1}}{(m-j-1)!} e^{-\frac{\gamma(m-j)}{\omega_B}},
\label{Eq_PDF_RS}
\end{equation}
where
\begin{equation}
B_j=\binom{m-1}{j} \left( \tfrac{m}{K_x+m}\right)^j \left( \tfrac{K_x}{K_x+m}\right)^{m-j-1},
\label{Eq_Bj}
\end{equation}
\begin{equation}
\omega_B=(m-j)\left( \tfrac{K_x}{K_x+m}\right)\left( \tfrac{\overline{\gamma}_x}{1+K_x}\right).
\label{Eq_wB}
\end{equation}
with $K_x$ and $\bar{\gamma}_x$ given in \eqref{Eq_gamma_x} and \eqref{Eq_K_x}.
\color{black} Substituting (\ref{Eq_PDF_RS}) into (\ref{Eq_int_PDF}), using the change of variables $t=\tfrac{1}{\alpha}\left(1-\beta(\tfrac{m-1}{m})-\alpha(1-x)\right)$ and taking into account that ${(t-\tfrac{\beta}{\alpha m})^j=\sum_{r=0}^{j}\binom{j}{r}t^j (\tfrac{\beta}{\alpha m})^{j-r}}$, the final expression for the PDF is derived.

\section{ Proof of Lemma~\ref{lemma2}}
\label{ap2}
The \ac{CDF} of the  \ac{fSOSF} model can also be obtained by averaging the \ac{CDF} of $\gamma_{x}$, i.e., the Rician shadowed \ac{CDF} over the exponential distribution:
\begin{equation}
\label{CDF_promediado}
F_{\gamma}(\gamma)=\int_0^\infty F_{\gamma_x}(\gamma;x) e^{-x}dx.
\end{equation}
\color {black}
For the case of integer $m$, a closed-form expression for the Rician shadowed CDF is presented in \cite[eq. (10)]{Martinez2017}, i.e. 
\begin{equation}
\label{CDF_Rice_Shadowed}
F_{\gamma_x}(\gamma;x)=1-\sum_{j=0}^{m-1}B_je^{\frac{-\gamma (m-j)}{\omega_B}} \sum_{r=0}^{m-j-1}\tfrac{1}{r!}\left( \tfrac{\gamma (m-j)}{\omega_B}\right)^r,
\end{equation}

Substituting \eqref{CDF_Rice_Shadowed} in \eqref{CDF_promediado} and following the same approach used in the previous appendix, we obtain the final expression.

\section{ Proof of Lemma~\ref{lemma3}}
\label{ap3}
Following the same procedure, the generalized \ac{MGF} of the \ac{fSOSF} model denoted as $\mathcal{M}_{\gamma}^{(n)}(s)$ can be obtained by averaging the generalized \ac{MGF} of $\gamma_{x}$, i.e., the Rician shadowed generalized \ac{MGF} over the exponential distribution:
\begin{equation}
\label{MGF_promediado}
\textcolor[rgb]{0,0,0}{\mathcal{M}_{\gamma}^{(n)}(s)=\int_0^\infty \mathcal{M}_{\gamma_x}^{(n)}(s;x) e^{-x}dx.}
\end{equation}
\textcolor[rgb]{0,0,0}{A closed-form expression for $M_{\gamma_x}(s;x)$ for integer $m$ is provided in \cite[eq. (26)]{Martinez2017}}
\begin{equation}
\label{MGF_Rician Shadowed}
M_{\gamma_x}(s;x)=\frac{m^m (1+K_x)}{\overline \gamma_x(K_x+m)^m}\frac{\left(s-\frac{1+K_x}{\overline \gamma_x} \right)^{m-1}}{\left(s-\frac{1+K_x}{\overline \gamma_x}\frac{m}{K_x+m} \right)^{m}}
\end{equation}

Substituting \eqref{Eq_gamma_x} and \eqref{Eq_K_x} into \eqref{MGF_Rician Shadowed} the expression for $M_{\gamma_x}(s;x)$ can be rewritten as
\begin{equation}
\label{MGF_Rician Shadowed__F1_F2}
M_{\gamma_x}(s;x)=F_1(s;x)\cdot F_2(s;x),
\end{equation}
where
\begin{align}
\label{MGF_Rician Shadowed_F1_F2_bis}
F_1(s;x)=&-\left[s\bar{\gamma}\left(1-\beta-\alpha(1-x)\right)-1\right]^{m-1},\\
F_2(s;x)=&\left[s\bar{\gamma}\left(1-\beta\left(\tfrac{m-1}{m} \right)-\alpha(1-x)\right)-1\right]^{-m}.
\end{align}

Next, we compute the $n$-th derivative of \eqref{MGF_Rician Shadowed__F1_F2} wich yields the Rician shadowed generalized \ac{MGF}
\begin{equation}
\label{MGF_Rician Shadowed_derivative}
\mathcal{M}_{\gamma_x}^{(n)}(s;x)=\tfrac{\partial^n M_{\gamma_x}(s;x)}{\partial s^n}=\sum_{q=0}^{n}\tbinom{n}{q}F_1^{(n-q)}(s;x)\cdot F_2^{(q)}(s;x),
\end{equation}
where $F_i^{(n)}(s;x)$ denotes the $n$-th derivative of $F_i(s;x)$ with respect to $s$.

\begin{align}
\label{F1_derivative}
F_1^{(n-q)}(s;x)=&-(m-n+q)_{n-q} s^{m-1-n+q} (\bar\gamma\alpha)^{m-1} \times \nonumber \\
&(x+a(s))^{m-1-n+q}(x+c)^{n-q}
\end{align}
\begin{align}
\label{F2_derivative}
F_2^{(q)}(s;x)=&(-1)^q (m)_{q} s^{-m-q} (\bar\gamma\alpha)^{-m} \times \nonumber\\
&(x+b(s))^{-m-q}(x+d)^{q},
\end{align}
where $(z)_j$ denotes the Pochhammer symbol and where
\begin{align}
\label{a}
a(s)=&\tfrac{s\bar\gamma(1-\beta-\alpha)-1}{s\bar\gamma\alpha},\\
b(s)=&\tfrac{s\bar\gamma\left(1-\beta\left( \tfrac{m-1}{m}\right)-\alpha\right)-1}{s\bar\gamma\alpha},\\
c=&\tfrac{1-\beta}{\alpha}-1,\\
d=&\tfrac{1-\beta\left( \tfrac{m-1}{m}\right)}{\alpha}-1.
\end{align}

From \eqref{MGF_Rician Shadowed_derivative}, \eqref{F1_derivative} and \eqref{F2_derivative} notice that $\mathcal{M}_{\gamma_x}^{(n)}(s;x)$ is a rational function of $x$ with two real positive zeros at $x=c$ and $x=d$, one real positive pole at $x=b(s)$ and one real positive zero or pole at $x=a(s)$ depending on whether $m\geq n+1$ or not. Integration of \eqref{MGF_promediado} is feasible with the help of \cite[eq. 13.4.4]{NIST}
\begin{equation}
\label{Integral Tricomi Wolfram}
\int_{0}^{\infty}\frac{x^i}{(x+p)^j}e^{-x}dx=\Gamma(i+1){\rm U}(j,j-i,p).
\end{equation}
where ${\rm U}\left(\cdot,\cdot,\cdot \right)$ is Tricomi's confluent hypergeometric function \cite[(13.1)]{NIST}. We need to expand $\mathcal{M}_{\gamma_x}^{(n)}(s;x)$ in partial fractions of the form $\tfrac{x^i}{(x+p)^j}$. Two cases must be considered:

\begin{itemize}
\item $m \geq n+1$

In this case there is only one pole at $x=b(s)$ and no partial fraction expansion is required. Using the fact that $(x+p)^n=\sum_{i=0}^{n}x^ip^{n-i}$, then \eqref{MGF SOSFS_1} is obtained. 

\item $m < n+1$

Now, there are two poles ($x=a(s)$, $x=b(s)$). After performing partial fraction expansion we obtain \eqref{MGF SOSFS_2}.
\end{itemize}

\section{ Proof of Lemma~\ref{lemma4}}
\label{ap4}
Using the definition of $\gamma_x$, we can write
\begin{equation}
\label{Momentos_promediado}
\mathbb{E}[\gamma^n]=\int_0^\infty \mathbb{E}[\gamma_x^n] e^{-x}dx.
\end{equation}
where $\mathbb{E}[\gamma_x^n]=\lim_{s \rightarrow 0^{-}} M_{\gamma_x}^{(n)}(s;x)$ where $M_{\gamma_x}^{(n)}(s;x)$ is given in \eqref{MGF_Rician Shadowed_derivative}. Performing the limit and using the integral $\int_{0}^{\infty}x^pe^{-x}dx=p!$, the proof is complete.

\bibliographystyle{ieeetr}
\bibliography{SOSF2}

\begin{thebibliography}{10}

\bibitem{Bithas2016}
P.~S. Bithas, K.~Maliatsos, and A.~G. Kanatas, ``{The Bivariate Double Rayleigh
  Distribution for Multichannel Time-Varying Systems},'' {\em IEEE Wireless
  Commun. Lett.}, vol.~5, no.~5, pp.~524--527, 2016.

\bibitem{Ai2018}
Y.~Ai, M.~Cheffena, A.~Mathur, and H.~Lei, ``{On Physical Layer Security of
  Double Rayleigh Fading Channels for Vehicular Communications},'' {\em IEEE
  Wireless Commun. Lett.}, vol.~7, no.~6, pp.~1038--1041, 2018.

\bibitem{Bithas2018}
P.~S. Bithas, A.~G. Kanatas, D.~B. da~Costa, P.~K. Upadhyay, and U.~S. Dias,
  ``{On the Double-Generalized Gamma Statistics and Their Application to the
  Performance Analysis of V2V Communications},'' {\em IEEE Trans. Commun.},
  vol.~66, no.~1, pp.~448--460, 2018.

\bibitem{Bithas2020}
P.~S. Bithas, V.~Nikolaidis, A.~G. Kanatas, and G.~K. Karagiannidis,
  ``{UAV-to-Ground Communications: Channel Modeling and UAV Selection},'' {\em
  IEEE Trans. Commun.}, vol.~68, no.~8, pp.~5135--5144, 2020.

\bibitem{Devineni2019}
J.~K. Devineni and H.~S. Dhillon, ``{Ambient Backscatter Systems: Exact Average
  Bit Error Rate Under Fading Channels},'' {\em IEEE Trans. Green Commun.
  Netw.}, vol.~3, no.~1, pp.~11--25, 2019.

\bibitem{Unai2020}
U.~Fernandez-Plazaola, L.~Moreno-Pozas, F.~J. Lopez-Martinez, J.~F. Paris,
  E.~Martos-Naya, and J.~M. Romero-Jerez, ``{A Tractable Product Channel Model
  for Line-of-Sight Scenarios},'' {\em IEEE Trans. Wireless Commun.}, vol.~19,
  no.~3, pp.~2107--2121, 2020.

\bibitem{Vinogradov2015}
E.~Vinogradov, W.~Joseph, and C.~Oestges, ``{Measurement-Based Modeling of
  Time-Variant Fading Statistics in Indoor Peer-to-Peer Scenarios},'' {\em IEEE
  Trans. Antennas Propag.}, vol.~63, pp.~2252--2263, May 2015.

\bibitem{Nikolaidis2018}
V.~Nikolaidis, N.~Moraitis, P.~S. Bithas, and A.~G. Kanatas, ``{Multiple
  Scattering Modeling for Dual-Polarized MIMO Land Mobile Satellite
  Channels},'' {\em IEEE Trans. Antennas Propag.}, vol.~66, pp.~5657--5661, Oct
  2018.

\bibitem{Andersen2002}
J.~B. Andersen, ``Statistical distributions in mobile communications using
  multiple scattering,'' in {\em Proc. 27th URSI General Assembly}, pp.~1--4,
  2002.

\bibitem{Salo2006}
J.~Salo, H.~M. El-Sallabi, and P.~Vainikainen, ``{Statistical Analysis of the
  Multiple Scattering Radio Channel},'' {\em IEEE Trans. Antennas Propag.},
  vol.~54, pp.~3114--3124, Nov 2006.

\bibitem{Lopez2018}
J.~Lopez-Fernandez and F.~J. Lopez-Martinez, ``{Statistical Characterization of
  Second-Order Scattering Fading Channels},'' {\em IEEE Trans. Veh. Technol.},
  vol.~67, pp.~11345--11353, Dec 2018.

\bibitem{Abdi2003}
A.~{Abdi}, W.~C. {Lau}, M.~. {Alouini}, and M.~{Kaveh}, ``{A new simple model
  for land mobile satellite channels: first- and second-order statistics},''
  {\em IEEE Trans. Wireless Commun.}, vol.~2, no.~3, pp.~519--528, 2003.

\bibitem{Paris2014}
J.~F. Paris, ``Statistical characterization of kappa-mu shadowed fading,'' {\em
  IEEE Trans. Veh. Technol}, vol.~63, pp.~518--526, Feb 2014.

\bibitem{Romero2022}
J.~M. Romero-Jerez, F.~J. Lopez-Martinez, J.~P. Peña-Martín, and A.~Abdi,
  ``{Stochastic Fading Channel Models With Multiple Dominant Specular
  Components},'' {\em IEEE Trans. Veh. Technol.}, vol.~71, no.~3,
  pp.~2229--2239, 2022.

\bibitem{Lopez2022}
J.~López-Fernández, P.~Ramirez-Espinosa, J.~M. Romero-Jerez, and F.~J.
  López-Martínez, ``{A Fluctuating Line-of-Sight Fading Model With
  Double-Rayleigh Diffuse Scattering},'' {\em IEEE Trans. Veh. Technol.},
  vol.~71, no.~1, pp.~1000--1003, 2022.

\bibitem{CHAUDHRY199499}
M.~Chaudhry and S.~Zubair, ``Generalized incomplete gamma functions with
  applications,'' {\em J Comput Appl Math}, vol.~55, no.~1, pp.~99 -- 123,
  1994.

\bibitem{NIST}
``{\it NIST Digital Library of Mathematical Functions}.''
  http://dlmf.nist.gov/, Release 1.0.21 of 2018-12-15.
\newblock F.~W.~J. Olver, A.~B. {Olde Daalhuis}, D.~W. Lozier, B.~I. Schneider,
  R.~F. Boisvert, C.~W. Clark, B.~R. Miller and B.~V. Saunders, eds.

\bibitem{Martinez2017}
F.~J. Lopez-Martinez, J.~F. Paris, and J.~M. Romero-Jerez, ``{The
  $\kappa$-$\mu$ Shadowed Fading Model with Integer Fading Parameters},'' {\em
  IEEE Trans. Veh. Technol}, vol.~66, no.~9, pp.~7653--7662, 2017.

\end{thebibliography}
\end{document}